\let\originallesssim\lesssim
\let\originalgtrsim\gtrsim
\DeclareRobustCommand{\lesssim}{%
  \mathrel{\mathpalette\lowersim\originallesssim}%
}
\DeclareRobustCommand{\gtrsim}{%
  \mathrel{\mathpalette\lowersim\originalgtrsim}%
}
\newcommand{\lowersim}[2]{%
  \sbox\z@{$#1<$}%
  \raisebox{-\dimexpr\height-\ht\z@}{$\m@th#1#2$}%
}
\newtheorem{thm}{Theorem}[section]
\newtheorem{remark}[thm]{Remark}
\newtheorem{prop}[thm]{Proposition}
\newtheorem{coro}[thm]{Corollary}
\newtheorem{defn}[thm]{Definition}
\newcommand\independent{\protect\mathpalette{\protect\independent}{\perp}} 
\def\independent#1#2{\mathrel{\rlap{$#1#2$}\mkern2mu{#1#2}}}
\newcommand{\R}{\mathbb{R}}
\def\E{{\mathbb{E}}}
\def\phi{\varphi}
\def\bee{\begin{eqnarray*}}
\def\ene{\end{eqnarray*}}
\begin{document}
%
\title{A Quantitative Entropy Power Inequality for Dependent Random Vectors}
%
%
%
\author{Mokshay Madiman, James Melbourne, and Cyril Roberto }
\date{\today}

\maketitle

\begin{abstract}
  The entropy power inequality  for independent random vectors is a foundational result of information theory, with deep connections to probability and geometric functional analysis. Several extensions of the entropy power inequality have been developed for settings with dependence, including by Takano, Johnson, and Rioul. We extend these works by developing a quantitative version of the entropy power inequality for dependent random vectors. A notable consequence is that an entropy power inequality stated using conditional entropies holds for random vectors whose joint density is log-supermodular.
\end{abstract}
  


{\bf Keywords:} Entropy power inequality; dependence; supermodular; FKG; Fisher information.

%

\section{Introduction}
\label{sec:intro}

When a random vector $X$ taking values in $\R^d$ has density $f$, the {\it entropy} of $X$ is 
\begin{equation}\label{eq:SNE}
h(X)=h(f):= -\int f(x)\log f(x)dx= \E[-\log f(X)] .
\end{equation}
This quantity is sometimes called the Boltzmann-Shannon or Boltzmann-Gibbs entropy, or the differential entropy.
In dimension $d$, the {\it entropy power} of $X$ is 
$
N(X)=e^{\frac{2h(X)}{d}}.
$
As is usual, we abuse notation and write $h(X)$ and $N(X)$,
even though these are functionals depending only on the density
of $X$ and not on its random realization.
The entropy power $N(X)\in [0,\infty]$ can be thought of as a ``measure of randomness''. 
It is an (inexact) analogue of volume: 
if $U_A$ is uniformly distributed on a bounded Borel set $A$, then it is easily checked that 
$h(U_A)=\log |A|$ and hence $N(U_A)=|A|^{2/d}$.
The reason for this particular definition of entropy power is that the ``correct'' comparison is not to uniforms but to Gaussians: observe that when $Z\sim N(0,\sigma^{2}I)$
(i.e., $Z$ has the Gaussian distribution with mean 0 and covariance matrix that is a multiple of the identity), 
the entropy power of $Z$ is $N(Z)=(2\pi e) \sigma^{2}$.
Thus the entropy power of $X$ is-- up to a universal constant--
the variance of the isotropic normal that has the same entropy as $X$,
i.e., if $Z\sim N(0,\sigma_Z^{2}I)$ and $h(Z)=h(X)$, then
$$
N(X)=N(Z)=(2\pi e) \sigma_{Z}^{2} .
$$
Looked at this way, $N(X)/(2\pi e)$ is the ``effective variance'' of the random vector $X$.

The entropy power inequality (EPI) states that for any two independent random vectors $X$ and $Y$ in $\R^d$ such that the entropies of $X, Y$ and $X+Y$ exist, 
$$
N(X+Y) \geq N(X) + N(Y) .
$$
The EPI was stated by Shannon \cite{Sha48} with an incomplete proof; the first complete proof
was provided by Stam in his Ph.D. thesis \cite{Sta59:phd} (see also \cite{Sta59}). The EPI plays an important role in Information Theory, where it first arose and was used (first by Shannon, and later by many others) to prove statements about the fundamental limits of communication over various models of communication channels. It has also been recognized as a very useful inequality in Probability Theory, with close connections to the logarithmic Sobolev inequality for the Gaussian distribution as well as to the Central Limit Theorem. The history of the EPI and its connections to many other inequalities are described in the surveys \cite{DCT91,FMMZ18,Gar02,MMX17:0} and the connections to the Central Limit Theorem in \cite{Joh04:book, MR25}.

The EPI has been generalized and extended in several directions.  For instance, there has been considerable recent progress in the understanding of R\'enyi entropy analogs of the EPI (see \cite{BC14, BC15:1,BM17, FGR24, Gavalakis24, LMM20, MM19, melbourne2025minimum, melbourne2022quantitative, RS16,  Rio18, Tao10, WM14}), using which fruitful connections have been made with other branches of mathematics \cite{BNT16, Li18:1,  MK18, MMX17:1, MP23, MT20, melbourne2025renyi}.  Discrete analogs have been explored in \cite{ALM17, BMM22, HAT14, HV03, JY10, MMR23,  MWW19, MWW21, marsiglietti2025note, WM15:isit}.   Matrix extensions and connections of the EPI with powerful functional inequalities like the Brascamp-Lieb inequality have been discovered \cite{ AJN22, CL10, LCCV18, ZF93}.  Further strengthenings, sharpenings, and generalizations can be found in \cite{ABBN04:1, Cou18, CFP18, MB07,  MG19, MNT20, Tos15:1}, while so-called reverse entropy power inequalities for special classes of distributions have been obtained in \cite{BM11:cras, BM12:jfa, BM13}.

The aim of this paper is to deliver an EPI for dependent summands.  Several previous attempts have been made towards this same goal, see \cite{CS91,  Joh04:1, Joh05, Tak96, Tak98}, which we now briefly recollect. This paper borrows the framework of \cite{CS91}, where Carlen and Soffer-- among other results-- derive a stability version of the entropy power inequality by obtaining a monotonicity result along the Ornstein-Uhlenbeck semigroup (in contrast to Stam \cite{Sta59}, whose original proof of the EPI used similar monotonicity result for the heat semigroup), and in the same paper derived CLT convergence even for some dependent variables using entropic methods.  The first EPI stated and proved for potentially dependent random variables, as far as we are aware, comes from Takano \cite{Tak96, Tak98}, who shows that an EPI holds for some pairs of dependent $\mathbb{R}$-valued random variables.  Takano's argument, which also uses monotonicity in heat flow, requires weak dependence as measured by a constant  $\delta_4$, which is the $4$-th moment of the averaged normalized difference between the joint density and the product of the marginals. 
Based on similar machinery as Takano, a breakthrough in perspective was achieved by Johnson \cite{Joh04:1}, who pursued a {\it conditional} entropy power inequality (i.e., one expressed in terms of conditional entropies) for a pair of real-valued random variables. Rioul provided a more general result in \cite{Rio11}, that unifies both \cite{Joh04:1, Tak96, Tak98}. 

In this paper, we present some new EPIs for dependent summands, with specific attention paid to  demonstrating and exploring ``log-supermodularity'' as a sufficient condition for the veracity of EPIs.   In particular we have the following theorem.

\begin{thm}\label{thm: main from intro}
Let $X, Y$ be $\mathbb{R}^d$-valued random variables such that their joint density is log-supermodular. Then
\begin{align} \label{eq: sup mod EPI}
    e^{\frac{2}{d} h(X+Y)} \geq e^{\frac{2}{d} h(X|Y)} + e^{\frac{2}{d} h(Y|X)}.
\end{align}
\end{thm}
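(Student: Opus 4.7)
The plan is to follow the classical Stam--Blachman strategy as adapted by Rioul \cite{Rio11} to the conditional setting: establish a Fisher information inequality (FII) under log-supermodularity, show this hypothesis is preserved along the Gaussian heat flow, and integrate via de Bruijn's identity.

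\textbf{Step 1 (FII from log-supermodularity).} The score of $X+Y$ admits two representations obtained by conditioning on $X+Y$,
\[
\nabla\log f_{X+Y}(z)\;=\;\E[\nabla_x\log f(X,Y)\mid X+Y=z]\;=\;\E[\nabla_y\log f(X,Y)\mid X+Y=z],
\]
so it equals $\E[a\nabla_x\log f+b\nabla_y\log f\mid X+Y=z]$ for every $a+b=1$. Jensen's inequality gives
\[
J(X+Y)\;\leq\;a^2\,J(X\mid Y)\,+\,2ab\,\E[\nabla_x\log f\cdot\nabla_y\log f]\,+\,b^2\,J(Y\mid X).
\]
A coordinatewise integration by parts transforms the cross term into $\E[\partial_{x_i}\log f\cdot\partial_{y_i}\log f]=-\E[\partial_{x_i}\partial_{y_i}\log f]$, which is nonpositive under log-supermodularity (all mixed second partials of $\log f$ are $\geq 0$). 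Dropping this term and optimizing over $a+b=1$ produces
\[
\frac{1}{J(X+Y)}\;\geq\;\frac{1}{J(X\mid Y)}\;+\;\frac{1}{J(Y\mid X)}.
\]

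\textbf{Step 2 (Preservation and heat-flow integration).} For $s,r\geq 0$ set $X_s=X+\sqrt{s}\,N_X$ and $Y_r=Y+\sqrt{r}\,N_Y$, with $N_X,N_Y$ independent standard Gaussians in $\R^d$ independent of $(X,Y)$. The joint density of $(X_s,Y_r)$ is $f$ convolved with the log-modular Gaussian kernel $\gamma_s\otimes\gamma_r$, and by the Karlin--Rinott closure of MTP2 distributions under convolution it remains log-supermodular for all $s,r\geq 0$. Hence the FII of Step~1 persists along the flow. Using $h(X_s\mid Y_r)=h(X_s,Y_r)-h(Y_r)$ together with the joint, marginal, and ``sum'' forms of de Bruijn's identity, one differentiates
\[
\Phi(s,r)\;:=\;e^{\frac{2}{d}h(X_s+Y_r)}\,-\,e^{\frac{2}{d}h(X_s\mid Y_r)}\,-\,e^{\frac{2}{d}h(Y_r\mid X_s)}
\]
and selects a Stam-type time parametrization $s(t),r(t)$ with $s+r=t$ for which the FII forces $t\mapsto\Phi(s(t),r(t))$ to be non-increasing. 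Since the joint and marginal laws are asymptotically Gaussian of variances $s$ and $r$, they saturate the Gaussian EPI, so $\Phi\to 0$ at infinity, forcing $\Phi(0,0)\geq 0$, which is the claim.

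\textbf{Main obstacle.} The conceptual heart of the proof is Step~1: log-supermodularity is exactly the condition that makes the cross-score term $\E[\nabla_x\log f\cdot\nabla_y\log f]$ nonpositive through integration by parts, thereby delivering the conditional FII. The most delicate technical point is in Step~2, where the time evolutions of $h(X_s\mid Y_r)$ and $h(Y_r\mid X_s)$ introduce the marginal Fisher informations $J(X_s)$ and $J(Y_r)$, which do not appear in the FII. Choosing $s'(t)$ and $r'(t)$ so as to absorb these surplus terms while maintaining monotonicity of $\Phi$ is the crux of Rioul's integration machinery, which we adapt; the convolution preservation, while essential, is a standard consequence of the MTP2 literature once invoked.
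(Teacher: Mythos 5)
Your Step 1 is sound and is essentially the paper's own key mechanism: the identity $\rho_W(w)=\E[\rho_i(X)\mid W=w]$ plus conditional Jensen gives $J(X+Y)\leq a^2I_{11}+2abI_{12}+b^2I_{22}$ (Proposition \ref{fisher} with $n=2$), and the integration by parts $I_{12}=-\sum_k\E\bigl[\partial_{x_k}\partial_{y_k}\log f\bigr]\leq 0$ under log-supermodularity is exactly the computation in the proof of Corollary \ref{cor:submodular}. The genuine gap is in Step 2, where you justify preservation of log-supermodularity along the flow by ``the Karlin--Rinott closure of MTP2 distributions under convolution.'' No such closure theorem exists: Karlin and Rinott in fact exhibit a counterexample showing that the convolution of two log-supermodular (even Gaussian) densities need not be log-supermodular. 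The specific fact you need --- that convolving a log-supermodular density with an \emph{isotropic} Gaussian kernel preserves log-supermodularity --- was an open conjecture of Zartash and Robeva until its recent resolution in \cite{MMR25:2}, and it is the single deepest external input to the theorem. As written, your citation points to literature that refutes, rather than supports, the step; the step is salvageable only by invoking the recently confirmed conjecture (or by weakening the hypothesis to ``log-supermodular after Gaussian convolution,'' which is how an earlier draft of this paper stated the result).

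Beyond that, your route differs from the paper's in its integration machinery, and both work. You run a two-parameter heat flow $(X_s,Y_r)$ with a Blachman-type adaptive parametrization $s'=N(X_s|Y_r)$, $r'=N(Y_r|X_s)$; the paper instead uses the Ornstein--Uhlenbeck semigroup with a single time and weights $\lambda_i$ fixed once and for all by the conditional entropy powers at time zero (Lieb's linearization), which has the added benefit of producing the quantitative remainder $\bar S$ of Corollary \ref{cor:conditionalEPI} for arbitrary joint densities and arbitrary $n$. Two points in your Step 2 deserve care. First, the surplus terms $I_{22}-I(Y_r)$ and $I_{11}-I(X_s)$ arising from differentiating the conditional entropies are nonnegative by \eqref{eq:comparison}, so they help rather than need to be ``absorbed'' by the parametrization; this is the same sign observation the paper exploits. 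Second, monotonicity of the \emph{difference} $\Phi$ does not follow directly: the computation yields only $\Phi'\leq c(t)\Phi$ with $c(t)>0$, so you must either pass to the ratio $N(X_s+Y_r)/(N(X_s|Y_r)+N(Y_r|X_s))$ or apply a Gr\"onwall argument together with boundedness of $\Phi$ at infinity; as stated, ``the FII forces $\Phi$ to be non-increasing'' is not quite what the estimate delivers.
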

The definition of log-supermodularity will be recalled in subsequent sections, where results will be stated and derived in greater generality.  For the moment we just mention that log-supermodularity can be understood as a specific sort of positive dependence, and that in the presence of negative dependence Theorem \ref{thm: main from intro} is likely to fail.  For example when $X$ and $Y$ are standard normal random variables with negative covariance, \eqref{eq: sup mod EPI} fails. Let us mention that when $X$ and $Y$ are independent, $(X,Y)$ has a log-supermodular density if both $X$ and $Y$ have log-supermodular densities. 
 Thus when $d = 1$, Theorem \ref{thm: main from intro} contains the classical EPI since every density on the real line is trivially log-supermodular.  
 We emphasize that although Theorem \ref{thm: main from intro} will be derived from a more general inequality that does contain the classical EPI, as stated  above  the conditional log-supermodular EPI does not generalize the classical one for $d \geq 2$.

We will also highlight a recently confirmed conjecture (see \cite{MMR25:2}) of Zartash and Robeva \cite{ZR22}, that log-super\-modularity is stable under standard Gaussian convolution.  This allowed Theorem \ref{eq: sup mod EPI} to be stated for random variables with log-supermodular joint density, while the original draft of this article formulated a conditional EPI for variables whose joint density was log-supermodular after convolution with a standard Gaussian\footnote{Log-supermodularity under convolution with a standard Gaussian implies log-supermodularity (at least almost everywhere).  Indeed, by a homogeneity argument, $(X,Y)$ log-supermodular under convolution with a standard Gaussian $Z$ implies that $(X,Y)$ is log-supermodular under convolution with $tZ$ for any $t >0$, taking $t \to 0$ will show that $(X,Y)$ are log-supermodular.}.

Let us outline the rest of the paper.  In Section \ref{sec:notn} we fix notations and definitions requisite for the analysis.  In Section \ref{sec:stam}, we extend Johnson's \cite{Joh04:1} (see also \cite{Tak96}) Fisher information inequality for pairs of random variables to arbitrary number of summands.  In Section \ref{sec:OU-epi}, we derive a general EPI for dependent random variables via the Ornstein-Uhlenbeck flow, with a quantitative error term given by the integral of a Fisher information quantity along the flow.   In Section \ref{sec:cond-EPI} we derive corollaries of the main result, and explain the usefulness of log-supermodularity.  

\section{Notation and definitions}
\label{sec:notn}

We use the following notation for convenience throughout the remainder of this paper. Let $X_1,\dots,X_n$ each be $\mathbb{R}^d$-valued random variables on a given probability space (we use the term random vector and random variable interchangeably if the dimension is understood from context). Denote by  $X=(X_1,\dots,X_n)$ the $(\mathbb{R}^d)^n$-valued random variable, with components $X_i \in \mathbb{R}^d$ and by $p \colon (\mathbb{R}^d)^n \to \mathbb{R}^+$ the joint density of $X$.  We always assume that $p$ is smooth enough for the relevant derivatives to be well defined in the classical sense.  Note that this will always be the case for densities convolved with Gaussians (for example, along the heat or Ornstein-Uhlenbeck flow).  We also assume that $\mathbb{E}\|X_i\|^2 < \infty$ and that any mentioned density function $f$ is such that $f \log f$
is integrable (so that the entropy $h(f)$ is well defined and finite). We do not pursue minimal regularity hypotheses here.  In particular and in light of \cite{gavalakis2025conditions}, one might ask if the second moment conditions can be removed.  

We often use the shorthand notation $x_i^j:=(x_i,x_{i+1},\dots,x_j)$, $i,j \in \{1,\dots,n\}$, $i \leq j$. Also, $p$ is function of $x=(x_1,\dots,x_n)$, with, for $i = 1,\dots,n$, $x_i=(x_{i,1},\dots,x_{i,d}) \in \mathbb{R}^d$.

Then, 
the score function of the vector $X=(X_1,\dots, X_n)$ is defined for $x \in (\mathbb{R}^d)^n$ as
$$
\rho(x):= -\frac{\nabla p}{p}(x) = (\rho_1(x),\dots,\rho_n(x)) \in (\mathbb{R}^d)^n,
$$
where $\nabla$ denotes the gradient vector.
For simplicity of notations set $\nabla_i=(\frac{\partial}{\partial x_{i,1}},\dots,\frac{\partial}{\partial x_{i,d}})$
so that 
$$
\rho_i = - \frac{\nabla_i p}{p} \in \mathbb{R}^d .
$$
Next we define the Fisher information matrix ${\bf I}=(I_{ij})_{i,j}$ on $\mathbb{R}^n \times \mathbb{R}^n$ as
$$
I_{ij} 
=
I_{ij}(X) \coloneqq
\mathbb{E}_X  
\langle \rho_i, \rho_j \rangle = \mathbb{E} \langle \rho_i(X), \rho_j(X) \rangle
$$
for $i,j =1, \dots,n,$ where $\langle\cdot, \cdot \rangle$ stands for the usual scalar product on $\mathbb{R}^d$.

Now denote by $f \colon \mathbb{R}^d \to \mathbb{R}^+$ the density of the sum $W:=X_1+\dots+X_n$ and define similarly its score function as
$$
\rho_W:= - \frac{\nabla f}{f} \in \mathbb{R}^d .
$$
It will be useful to relate the score function of $W$ to that of $X$. For simplicity of notations set 
$\bar{x}^{i}$ for the vector $x$ without the coordinates of $x_{i}$, and the corresponding differential form
$d\bar{x}^{i}:=dx_1 \dots dx_{i-1} dx_{i+1}\dots dx_n$.
Then, following Takano \cite{Tak96} we observe that, with the notation $x_w^{(i)} = (x_1^{i-1},w- \sum_{j \neq i} x_j, x_{i+1}^n)$ for any $i=1,\dots,n$,
\begin{align}
\rho_W(w) 
& = 
- \int \dots \int  \frac{\nabla_i p(x_w^{(i)})}{f(w)} d\bar{x}^i \nonumber \\
& =
- \int \dots \int  \frac{\nabla_i p(x_w^{(i)})}{p(x_w^{(i)})} \frac{p(x_w^{(i)})}{f(w)} d\bar{x}^i \nonumber  \\
& =
\mathbb{E}\left(\rho_i(X) | W=w  \right) . \label{contioning}
\end{align}
In particular, for any $(\lambda_1,\dots,\lambda_n) \in \mathbb{R}^n$, it holds
\begin{equation} \label{start}
\left(\sum_{i=1}^n \lambda_i  \right) \rho_W(w) = \mathbb{E}\left(\sum_{i=1}^n \lambda_i \rho_i(X) \bigg| W=w  \right), \qquad w \in \mathbb{R}^d .
\end{equation}
In general, for a random variable $X$ with density $p$, its Fisher information is
$$
I(X) = \int \frac{|\nabla p|^2}{p} .
$$
Also, we denote by $J$ (or $I(W)=I(X_1+\dots+X_n)$) the Fisher information of $W$, \textit{i.e.}\ 
$$
J \coloneqq \mathbb{E}_W\left( \langle \rho_W,\rho_W \rangle \right) = \mathbb{E} | \rho_W(W) |^2.
$$

Finally we recall the definition of the conditional entropy. Given $X$ an $\mathbb{R}^n$-valued random variable and $Y$ an $\mathbb{R}^k$-valued random variable with a joint density, the conditional entropy of $X$ given $Y$ is defined as
\[
    h(X|Y) \coloneqq h(X,Y) - h(Y).
\]

\section{Fisher information inequality for dependent random variables.}
\label{sec:stam}

In this section, we extend the Fisher information inequality for dependent variables obtained by Johnson \cite{Joh04:1} in the case $n=2$ to arbitrary $n$.
Denote by $\mathbf{1}:=(1,\dots,1)$ the vector of $\mathbb{R}^n$ with all ones, and recall that $\langle \cdot, \cdot \rangle$ denotes the scalar product (of $\mathbb{R}^d$ or $\mathbb{R}^n$, depending on the context).

\begin{prop}\label{fisher}
For $\mathbb{R}^d$-valued random vectors $X_1$, $\dots, X_n$ with sufficiently smooth joint density, the inequality
\begin{equation} \label{eq:lambdaprop}
\left(\sum_{i=1}^n \lambda_i \right)^2 J \leq \sum_{i,j=1}^n \lambda_i \lambda_j I_{ij} 
\end{equation}
holds for any $(\lambda_1,\dots,\lambda_n)\in \mathbb{R}^n$.
Furthermore, when the matrix ${\bf I}$ is invertible,
\begin{equation} \label{eq: optimized blachmanstam}
\frac{1}{J} \geq \langle\mathbf{1}, {\bf I}^{-1} \mathbf{1} \rangle .
\end{equation}
\end{prop}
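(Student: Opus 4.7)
The plan is to obtain \eqref{eq:lambdaprop} as a direct consequence of the conditional-expectation identity \eqref{start} already derived in Section \ref{sec:notn}, and then to deduce \eqref{eq: optimized blachmanstam} by optimizing over $(\lambda_1,\dots,\lambda_n)$.

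First I would start from \eqref{start}, which expresses $(\sum_i\lambda_i)\rho_W(w)$ as the conditional expectation of $\sum_i \lambda_i \rho_i(X)$ given $W=w$. Taking squared Euclidean norms on $\mathbb{R}^d$ and then applying the conditional Jensen inequality, pointwise in $w$, yields
\[
\Big(\sum_{i=1}^n \lambda_i\Big)^{2} |\rho_W(w)|^2 \leq \mathbb{E}\Big(\Big|\sum_{i=1}^n \lambda_i \rho_i(X)\Big|^{2}\,\Big|\, W=w\Big).
\]
Now I would integrate against the law of $W$ and use the tower property. The left-hand side becomes $(\sum_i\lambda_i)^2 J$ by definition of $J$. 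Expanding the square on the right and recognising that $\mathbb{E}\langle\rho_i(X),\rho_j(X)\rangle = I_{ij}$, one obtains exactly $\sum_{i,j}\lambda_i\lambda_j I_{ij}$, which gives \eqref{eq:lambdaprop}. The only subtlety is justifying the interchange of conditional expectation and the Cauchy–Schwarz step, but this is immediate since both sides are nonnegative and the smoothness/integrability assumptions on $p$ placed in Section \ref{sec:notn} guarantee all quantities are finite.

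For \eqref{eq: optimized blachmanstam}, I would restrict to vectors $\lambda$ with $\sum_i \lambda_i = 1$, which gives the upper bound $J \leq \langle \lambda,\mathbf{I}\lambda\rangle$, and then minimize the quadratic form $\langle\lambda,\mathbf{I}\lambda\rangle$ over the affine constraint $\langle\mathbf{1},\lambda\rangle=1$. Lagrange multipliers give the optimizer $\lambda^{*}=\mathbf{I}^{-1}\mathbf{1}/\langle\mathbf{1},\mathbf{I}^{-1}\mathbf{1}\rangle$, and substituting back yields the minimum value
\[
\langle \lambda^{*},\mathbf{I}\lambda^{*}\rangle = \frac{1}{\langle\mathbf{1},\mathbf{I}^{-1}\mathbf{1}\rangle}.
\]
Since $\mathbf{I}$ is positive semi-definite (being the Gram matrix of the vectors $\rho_i(X)$ under the $L^2$ inner product), invertibility makes it positive definite, so this is a genuine minimum and $\langle\mathbf{1},\mathbf{I}^{-1}\mathbf{1}\rangle > 0$. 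Rearranging gives \eqref{eq: optimized blachmanstam}.

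I do not expect any step to be a serious obstacle: the main inequality is essentially conditional Jensen applied to \eqref{start}, and the optimization is a standard quadratic-on-a-hyperplane calculation. The only thing to be careful about is that \eqref{eq:lambdaprop} is vacuous when $\sum_i\lambda_i=0$ (both sides need to be handled correctly, but the inequality holds trivially as the left side vanishes and the right side is nonnegative because $\mathbf{I}\succeq 0$), so restricting to $\sum_i\lambda_i=1$ in the second part involves no loss of generality.
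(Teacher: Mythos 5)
Your proof is correct and is essentially the paper's own argument: the conditional Jensen step applied to \eqref{start} is exactly the paper's expansion of $\mathbb{E}\|(\sum_i\lambda_i)\rho_W(W)-\sum_i\lambda_i\rho_i(X)\|^2\geq 0$ combined with \eqref{contioning}, and your Lagrange-multiplier minimization of $\langle\lambda,{\bf I}\lambda\rangle$ over $\langle\mathbf{1},\lambda\rangle=1$ is the same optimization the paper carries out via Cauchy--Schwarz after the substitution $\alpha=\sqrt{\bf I}\lambda$. Both routes yield the same extremizer and the same value $1/\langle\mathbf{1},{\bf I}^{-1}\mathbf{1}\rangle$, so the differences are purely cosmetic.
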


\begin{proof}
Fix $(\lambda_1,\dots,\lambda_n)\in \mathbb{R}^n$. Since 
$$
\left\| \left(\sum_{i=1}^n \lambda_i \right) \rho_W(W) - \sum_{i=1}^n \lambda_i \rho_i(X)  \right\|^2 \geq 0
$$
expanding the product and taking the expectation, we have that
$$
\left(\sum_{i=1}^n \lambda_i \right)^2 J 
+ 
\sum_{i,j=1}^n \lambda_i \lambda_j I_{ij} 
-
2 \sum_{i=1}^n \lambda_i \sum_{j=1}^n \lambda_j \mathbb{E} \left( \langle \rho_W , \rho_j(X) \rangle\right),
$$
is non-negative.
But \eqref{contioning} guarantees that 
\begin{align*}
\mathbb{E} \left( \langle \rho_W ,\rho_j(X)\rangle \right)
    &= 
        \mathbb{E} \left(  \langle \rho_W ,\mathbb{E} \left( \rho_j(X) | W \right) \rangle\right) 
            \\
    &= 
        \mathbb{E} \left( \langle \rho_W, \rho_W\rangle \right)
            \\
    &= J.
\end{align*}
Hence, for any $(\lambda_1,\dots,\lambda_n)\in \mathbb{R}^n$
it holds
\begin{equation} \label{lambda}
\left(\sum_{i=1}^n \lambda_i \right)^2 J \leq \sum_{i,j=1}^n \lambda_i \lambda_j I_{ij} 
= \lambda^T {\bf I} \lambda 
\end{equation}
as expected, where the last equality is just a rewriting using the matrix representation (here $\lambda:=(\lambda_1,\dots,\lambda_n)$ and $\lambda^T$ is the vector $\lambda$ transposed).

Our aim is now to optimize \eqref{lambda} over all $(\lambda_1,\dots,\lambda_n)\in \mathbb{R}^n$. Indeed, \eqref{lambda} implies that
$$
\frac{1}{J} 
\geq 
\sup_{\lambda_1,\dots,\lambda_n} 
\frac{\left(\sum_{i=1}^n \lambda_i \right)^2}{\sum_{i,j=1}^n \lambda_i \lambda_j I_{ij}}
= 
\sup_{\lambda_1,\dots,\lambda_n} 
\frac{\langle \mathbf{1},\lambda\rangle^2}{\sum_{i,j=1}^n \lambda_i \lambda_j I_{ij}}
$$
where we recall that $\mathbf{1}:=(1,\dots,1)$.
Since ${\bf I}$ is positive semi-definite it has a square root we denote by $\sqrt{\bf I}$ that is invertible when ${\bf I}$ is. Then, setting $\alpha = \sqrt{\bf I}\lambda$,
the supremum above equals
$$
\sup_{\alpha_1,\dots,\alpha_n} \frac{\langle (\sqrt{\bf I})^{-1}\alpha,\mathbf{1}\rangle^2}{\langle \alpha,\alpha\rangle} .
$$ 
By the Cauchy--Schwarz Inequality and symmetry, 
\begin{align*}
\langle (\sqrt{\bf I})^{-1}\alpha,\mathbf{1}\rangle^2 
& = 
\langle \alpha,(\sqrt{\bf I})^{-1}\mathbf{1}\rangle^2 \\
& \leq \langle \alpha,\alpha\rangle \langle(\sqrt{\bf I})^{-1}\mathbf{1},(\sqrt{\bf I})^{-1}\mathbf{1}\rangle \\
& =
\langle \alpha,\alpha\rangle \langle\mathbf{1}, {\bf I}^{-1} \mathbf{1}\rangle
\end{align*}
which ends the proof using the equality case in the  Cauchy--Schwarz Inequality.
\end{proof}

When $d=1$ and the $X_i$ are assumed independent, \eqref{eq:lambdaprop} yields Stam's Fisher information inequality \cite[Equation 2.9]{Sta59} since $I_{ij} = 0$ for $i \neq j$.  Taking $d =1$ and $n =2$, \eqref{eq: optimized blachmanstam} recovers the inequality of Johnson \cite[Theorem 4]{Joh04:1}, while the additional assumption of independence one recovers Blachman \cite[Equation 3]{Bla65}.  The equivalence between \eqref{eq:lambdaprop} and \eqref{eq: optimized blachmanstam} in the independent case was attributed to De Bruijn by Stam in a footnote of \cite{Sta59}.



In some situations it will be useful to deal with
the Fisher information 
of the weighted sum $\sum_{i=1}^n \lambda_i X_i$ instead of that of $W=\sum_{i=1}^n X_i$.  
We claim that 
\begin{equation} \label{eq:blachman-lambda}
 I \left(\sum_{i=1}^n \lambda_iX_i \right) \leq 
\sum_{i,j =1}^n  \lambda_i \lambda_j
I_{ij}(X) 
\end{equation}
for any $\lambda_1, \dots,\lambda_n \in (0,1)$ with $\sum_{i=1}^n \lambda_i^2=1$, where as usual $X=(X_1,\dots,X_n)$.

To see this, observe first that if $p$ is the joint density of $X=(X_1,\dots,X_n)$, then for $x=(x_1,\dots,x_n) \in (\mathbb{R}^d)^n$
$$
q(x)=\frac{1}{\prod_{i=1}^n \lambda_i^d}p \left(\frac{x_1}{\lambda_1} , \dots,\frac{x_n}{\lambda_n} \right), 
$$
is the density of $(\lambda_1 X_1,\dots,\lambda_nX_n)$. Therefore, changing variables, and using the notation $\frac{x}{\lambda} \coloneqq (\frac{x_1}{\lambda_1},\dots,\frac{x_n}{\lambda_n})$
\begin{align} \nonumber
  &I_{ij}(\lambda_1X_1,\dots,\lambda_n X_n) \\
    &=
  \iint \frac{1}{q(x)} \langle \nabla_i q(x),\nabla_j q(x)\rangle  dx_1\dots dx_n \nonumber \\
&  =
  \iint \frac{\frac{1}{\lambda_i \lambda_j}
  \langle \nabla_i p(\frac{x}{\lambda}) ,\nabla_jp(\frac{x}{\lambda})\rangle/(\lambda_1^d \dots \lambda_n^d)^2}{ p\left(\frac{x}{\lambda}\right)/\lambda_1^d \dots \lambda_n^d} dx_1\dots dx_n \nonumber \\
  & =
  \frac{1}{\lambda_i \lambda_j}   \iint \frac{1}{p(x)} \langle \nabla_i p(x), \nabla_jp(x)\rangle dx_1\dots dx_n \nonumber \\
  & =
    \frac{1}{\lambda_i \lambda_j}  I_{ij}(X) . \label{eq:I-lambda}
\end{align}
By \eqref{eq:lambdaprop} for $t_i$ such that $\sum_{i=1}^n t_i = 1$ and $Y = (Y_1, \dots, Y_n)$,
\[
    I(Y_1 + \cdots + Y_n) \leq \sum_{i,j} t_i t_j I_{ij}(Y).
\]
Applying this to $\lambda$ such that $\sum_{i=1}^n \lambda_i^2 = 1$ and $X = (X_1, \dots, X_n)$ for $Y_i = \lambda_i X$ and $t_i = \lambda_i^2$, gives
\begin{align*}
    I(\lambda_1 X_1 + \cdots + \lambda_n X_n) 
        &\leq 
            \sum_{i,j} \lambda_i^2 \lambda_j^2 I_{ij}(\lambda_1 X_1, \dots, \lambda_n X_n)
                \\
        &= 
            \sum_{i,j} \lambda_i \lambda_j I_{ij}(X),
\end{align*}
where the equality follows from \eqref{eq:I-lambda}. We note that under the assumption of independence  \eqref{eq:blachman-lambda} yields \cite[Lemma 1.3]{CS91}, which also appears as \cite[Theorem 7]{Car91}.

We end this section with a simple but useful remark related to the Fisher information of a single random variable $X_i$ in comparison to that of $X=(X_1,\dots,X_n)$. Given the joint density $p$ of
$X$ (recall that $d\bar{x}^{i}:=dx_1 \dots dx_{i-1} dx_{i+1}\dots dx_n$), the $i$-th marginal is  
$$
p_i(x_i) := \int p(x_1,\dots,x_n)d\bar{x}_i , \qquad x_i \in \mathbb{R}^d 
$$
so that the Fisher information of $X_i$ is given by 
$$
I(X_i) = \int_{\mathbb{R}^d} \frac{|\nabla p_i|^2}{p_i} .
$$
Now, by the Cauchy--Schwarz Inequality,
\begin{align}  \label{eq:comparison}
I(X_i) 
& =
\int_{\mathbb{R}^d} \frac{\left| \int \nabla_i  p(x)d\bar{x}_i \right|^2}{\int   p(x)d\bar{x}_i} dx_i \nonumber \\
& =
\int_{\mathbb{R}^d}  \frac{\left| \int \frac{\nabla_i  p(x)}{\sqrt{p(x)}}\sqrt{p(x)} d\bar{x}_i \right|^2 }{\int   p(x)d\bar{x}_i} dx_i \nonumber \\
& \leq 
\iint_{\mathbb{R}^d \times (\mathbb{R}^d)^{(n-1)}}  \frac{\left|\nabla_i p(x)\right|^2}{ p(x)} d\bar{x}_i dx_i   \nonumber \\
& 
= I_{ii}(X).
\end{align}
Similarly (details are left to the reader), for any $i=1,\dots,n$,
\begin{align} \label{eq:comparison2}
I(X_j, j \neq i)  
   & =
   \int_{(\mathbb{R}^d)^{(n-1)}} \frac{\sum_{j \neq i}  \left| \int \nabla_j  p(x) dx_i \right|^2}{\int   p(x)d\bar{x}_i} d\bar{x}_i \nonumber \\
 & \leq 
 \sum_{j \neq i} I_{jj}(X) .
\end{align}

\section{EPI for dependent variables via the  Ornstein-Uhlenbeck flow}
\label{sec:OU-epi}

In this section we will use the Ornstein-Uhlenbeck approach of Carlen and Soffer \cite{CS91} to the entropy power inequality to get some
 general EPI valid for non independent variables.

Given a density $f \colon \mathbb{R}^m \to \mathbb{R}_+$ define the adjoint (for the Lebesgue measure) operator of the Orstein-Uhlenbeck semi-group $P_t^*$, as the solution of the following PDE in $\mathbb{R}^m$
$$
\frac{d}{dt}P_t^* f = (\Delta +x \cdot \nabla)P_t^*f + mP_t^*f 
$$
where the dot sign stands for the scalar product, $\Delta$ is the Laplacian, $\nabla$ the gradient operator (a vector) and initial condition $P_t^* f|_{t = 0} =f$. This semi-group has an explicit representation formula (Mehler-type formula). For $X$ with density $f$ and $Z \sim N(0,I)$, it holds
\begin{align*}
P_t^*f(x)
& =
e^{mt} \mathbb{E} [f(e^tx -\sqrt{e^{2t}-1}Z)] \\
& =
\mathbb{E} [g_{\sqrt{1-e^{-2t}}}(x-e^{-t}X)] 
\end{align*}
with $g_s(x)=\frac{e^{-|x|^2/2s}}{(2\pi s)^{m/2}}$ the $m$-dimensional centered Gaussian density of covariance $s\mathrm{Id}_{\mathbb{R}^m}$. 
Furthermore, $e^{-t}X + \sqrt{1-e^{-2t}}Z$
(with $X$ and $Z \sim N(0,I)$ independent) has density $P_t^*f$.

One key property is the following relation between the Shannon entropy and the Fisher information (see \cite[Lemma 1.2]{CS91}):
\begin{align}\label{eq:FisherOU}
 h(e^{-t}X &+ \sqrt{1-e^{-2t}}Z) \nonumber
    \\
    &= h(X) + \int_0^t I(e^{-s}X + \sqrt{1-e^{-2s}}Z) ds - mt. 
\end{align}
We are in position to prove the following proposition which constitutes a linearized form of the entropy power inequality for general random variables.

\begin{prop} \label{prop:dependent}
Let $X_1,\dots,X_n$ be $\mathbb{R}^d$-valued random variables and $\tilde{X}_i$ be independent random variables such that $X_i$ {\color{black} and $\tilde{X}_i$ are identically distributed}.  Let $\lambda_1,\dots,\lambda_n \in (0,1)$ so that $\sum_{i=1}^n \lambda_i^2=1$. Set
$Y_i =Y_i(t) := e^{-t}X_i + \sqrt{1-e^{-2t}}Z_i$,
$i=1,\dots,n$, where $Z_i \sim N(0,I)$ are independent (of one another and all $X_i$ and $\tilde{X}_i$) standard Gaussian on $\mathbb{R}^d$, and
$\tilde{Y}_i = \tilde{Y}_i(t) := e^{-t}\tilde{X}_i + \sqrt{1-e^{-2t}}Z_i$,
$i=1,\dots,n$,.
{\color{black} 
Then for all $t \geq 0$ it holds
\begin{align*}
h \left( \sum_{i=1}^n \lambda_i Y_i(t) \right)& - \sum_{i=1}^n \lambda_i^2 h(Y_i(t)) 
\\
&\leq 
h\left( \sum_{i=1}^n \lambda_i  X_i\right) - \sum_{i=1}^n \lambda_i^2 h(X_1) 
+ \lambda^T {\bf R}_t \lambda
\end{align*}
with
\begin{align*}
{\bf R}_t \coloneqq  \int_0^t
 \left(
{\bf I}(s) - \tilde{\bf I}(s)  
 \right)  ds
\end{align*}
}
{\color{black}
where ${\bf I}(s):={\bf I}(Y(s))$ and $\tilde{\bf I}(s):={\bf I}(\tilde{Y}(s))$ are the Fisher information matrices of the random vector 
$Y(s)=(Y_1(s),\dots, Y_n(s))$ 
 and
$\tilde{Y}(s)=(\tilde{Y}_1(s),\dots, \tilde{Y}_n(s))$ respectively.
}
In particular,
$$
h\left( \sum_{i=1}^n \lambda_i  X_i \right) 
\geq  
\sum_{i=1}^n \lambda_i^2 h(X_i) 
- \limsup_{t \to \infty} \lambda^T {\bf R}_t \lambda.
$$
\end{prop}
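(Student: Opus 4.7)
The plan is to express the entropy difference on the left-hand side as the sum of the same difference at $t=0$ plus an integral of Fisher-information quantities along the Ornstein--Uhlenbeck flow, and then to invoke the weighted Fisher information inequality \eqref{eq:blachman-lambda} to control that integral by $\lambda^{T} \bf{R}_t \lambda$.

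The first step is to recognize that, because $\sum_{i=1}^n \lambda_i^2 = 1$ and the $Z_i$ are independent standard Gaussians on $\mathbb{R}^d$, the sum $Z' := \sum_{i=1}^n \lambda_i Z_i$ is itself $N(0,I)$ and independent of $V := \sum_{i=1}^n \lambda_i X_i$. Consequently,
\[
    W(t) := \sum_{i=1}^n \lambda_i Y_i(t) = e^{-t} V + \sqrt{1 - e^{-2t}}\, Z',
\]
so $W(t)$ is nothing but the Ornstein--Uhlenbeck evolution of $V$. Applying the de Bruijn-type identity \eqref{eq:FisherOU} with $m = d$ to $W(t)$ and to each $Y_i(t)$ yields
\[
    h(W(t)) = h(V) + \int_0^t I(W(s))\, ds - dt,
\]
\[
    h(Y_i(t)) = h(X_i) + \int_0^t I(Y_i(s))\, ds - dt.
\]
Taking the linear combination $h(W(t)) - \sum_i \lambda_i^2 h(Y_i(t))$ and using $\sum_i \lambda_i^2 = 1$ to cancel the $dt$ terms gives
\[
    h(W(t)) - \sum_{i=1}^n \lambda_i^2 h(Y_i(t)) = h(V) - \sum_{i=1}^n \lambda_i^2 h(X_i) + \int_0^t \Bigl[ I(W(s)) - \sum_{i=1}^n \lambda_i^2 I(Y_i(s)) \Bigr] ds.
\]

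Next I would bound the integrand. By inequality \eqref{eq:blachman-lambda} applied at each time $s$ to the vector $Y(s) = (Y_1(s),\dots,Y_n(s))$ (the hypothesis $\sum_i \lambda_i^2 = 1$ is exactly what is required), one has $I(W(s)) \leq \lambda^{T} \bf{I}(s) \lambda$. On the other hand, the $\tilde{X}_i$ are independent and the $Z_i$ are independent of them and of each other, so the $\tilde{Y}_i(s)$ are independent across $i$; hence $\tilde{\bf I}(s)$ is diagonal, with $\tilde{I}_{ii}(s) = I(\tilde{Y}_i(s)) = I(Y_i(s))$ (the last equality because $\tilde{Y}_i(s) \stackrel{d}{=} Y_i(s)$ as one-dimensional OU evolutions of identically distributed variables). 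Therefore
\[
    \sum_{i=1}^n \lambda_i^2 I(Y_i(s)) = \lambda^T \tilde{\bf I}(s) \lambda,
\]
and the integrand is dominated by $\lambda^T (\bf{I}(s) - \tilde{\bf I}(s)) \lambda$. Integrating from $0$ to $t$ produces $\lambda^T \bf{R}_t \lambda$, which establishes the first inequality of the proposition.

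For the second, asymptotic statement, I would send $t \to \infty$ in the rearranged form
\[
    h(V) \geq \sum_{i=1}^n \lambda_i^2 h(X_i) + \Bigl( h(W(t)) - \sum_{i=1}^n \lambda_i^2 h(Y_i(t)) \Bigr) - \lambda^T \bf{R}_t \lambda .
\]
As $t\to\infty$ the OU flow drives both $W(t)$ and each $Y_i(t)$ to the standard Gaussian $N(0,I_d)$ on $\mathbb{R}^d$; under the standing finite second-moment assumption this entropic convergence is classical (it follows, for instance, from monotonicity of relative entropy to the Gaussian along OU), so $h(W(t))$ and $h(Y_i(t))$ both converge to $\frac{d}{2}\log(2\pi e)$. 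Because $\sum_i \lambda_i^2 = 1$, the parenthesized term tends to $0$, and taking $\liminf_{t\to\infty}$ yields $h(V) \geq \sum_i \lambda_i^2 h(X_i) - \limsup_t \lambda^T \bf{R}_t \lambda$, as required. The main subtlety I expect is the justification of that entropy convergence at infinity; the rest of the argument is essentially an assembly of \eqref{eq:FisherOU} and \eqref{eq:blachman-lambda} tuned to exploit $\sum_i \lambda_i^2 = 1$.
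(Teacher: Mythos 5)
Your proposal is correct and follows essentially the same route as the paper: both decompose the entropy difference via the de Bruijn identity \eqref{eq:FisherOU} along the Ornstein--Uhlenbeck flow (using that $\sum_i\lambda_i Z_i\sim N(0,I)$ so that $\sum_i\lambda_i Y_i(t)$ is the OU evolution of $\sum_i\lambda_i X_i$), bound $I\bigl(\sum_i\lambda_i Y_i(s)\bigr)$ by $\lambda^T{\bf I}(s)\lambda$ via \eqref{eq:blachman-lambda}, identify $\sum_i\lambda_i^2 I(Y_i(s))$ with $\lambda^T\tilde{\bf I}(s)\lambda$ through the diagonality of $\tilde{\bf I}(s)$, and conclude the asymptotic statement from the entropic convergence of the OU flow to the standard Gaussian. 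Your write-up is, if anything, slightly more explicit than the paper's on the independence of $\sum_i\lambda_i Z_i$ from $\sum_i\lambda_i X_i$ and on the limit justification.
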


\begin{remark}
Note that for independent random variables $X_1,\dots,X_n$, $R_t(X_1,\dots,X_n)=0$ for all $t$, therefore recovering Carlen-Soffer's inequality \cite[Theorem 1.1]{CS91} and Stam's linearized version of the entropy power inequality.
\end{remark}

\begin{proof}
Fix $t >0$. Applying \eqref{eq:blachman-lambda} to $Y_i(s)= e^{-s}X_i + \sqrt{1-e^{-2s}}Z_i$ with $s \leq t$, we have
$$
I\left(\sum_{i=1}^n \lambda_iY_i(s) \right) \leq \lambda^t {\bf I}(Y(s)) \lambda.
$$
Applying the the fundamental theorem of calculus term wise to the summation
\begin{align*}
    \sum_{i=1}^n \lambda_i^2 h(Y_i(t)) &- \sum_{i=1}^n \lambda_i^2 h(X_i)
            \\
        &=
            \sum_{i=1} \lambda_i^2 \left( \int_0^t I(Y_i(s)) ds  - t d \right)
                \\
        &=
            \lambda^T \left(\int_0^t {\bf I}( \tilde{Y}(s) ) ds\right) \ \lambda - td.
\end{align*}
By the same logic, and applying \eqref{eq:FisherOU} we have
\begin{align*}
h &\left( \sum_{i=1}^n \lambda_i Y_i(t) \right)  - h \left( \sum_{i=1}^n \lambda_i X_i \right)
\\
    &=
 \int_0^t I \left( \sum_{i=1}^n \lambda_i Y_i(s) \lambda \right) ds -  t \ d\\
 & \leq 
\lambda^T \int_0^t 
\left( 
{\bf I}(Y(s)) \right) ds \lambda -td.
\end{align*}
Combining this with the previous equality leads to the first part of the proposition.

For the second part it is enough to take the limit $t \to \infty$ and to observe that
$$
h\left( \sum_{i=1}^n\lambda_i Y_i(t) \right) - \sum_{i=1}^n \lambda_i^2 h(Y_i(t))
$$ 
converges to
$h(Z_{n+1}) - \sum_{i=1}^n \lambda_i^2 h(Z_i) $ where $Z_i \sim N(0,I)$ are independent. Since $h(Z_i)=\frac{d}{2}\log(2\pi e)$ and $\sum \lambda_i^2=1$, $h(Z_{n+1}) - \sum_{i=1}^n \lambda_i^2 h(Z_i)  = 0$, ending the proof of the proposition.
\end{proof}



As a direct consequence, we obtain the following entropy power inequality for general random variables.

\begin{coro} \label{cor:dependent}
Let $X_1,\dots,X_n$ be $\mathbb{R}^d$-valued random variables and, for $i=1,\dots, n$, 
$$
\lambda_i^2 := \frac{e^{\frac{2}{d} h(X_i)}}{e^{\frac{2}{d} h(X)}+\dots+e^{\frac{2}{d} h(X_n)}}
$$
Set $\bar X_i=X_i / \lambda_i$
and  
$\bar Y_i(s) = e^{-s}\bar X_i + \sqrt{1-e^{-2s}}Z_i$, $s \geq 0$,
where $Z_1, \dots,Z_n \sim N(0,I)$ are independent standard Gaussian (on $\mathbb{R}^d$). Finally denote
\begin{align*}
\bar R 
    &= 
        \bar R(X_1,\dots,X_n)
            \\
& {\color{black} =
\frac{2}{d} 
\lambda^T \left( \int_0^\infty \left[\bar{\bf  I}(s) - \mathrm{diag}(I(\bar Y_i(s)))_i \right]   ds \right) \lambda }
\end{align*}
{\color{black}where $\mathrm{diag}(I(Y_i(s)))_i )$ is the ($n \times n$) diagonal matrix with diagonal entries $I( \bar Y_i(s)))$, $i=1,\dots,n$ and
$\bar{\bf I}(s) \coloneqq {\bf I}(\bar{Y}(s))$
is the Fisher information matrix of the random vector $\bar Y(s)=(\bar Y_1(s),\dots,\bar Y_n(s))$.
}
Then,
$$
e^{\frac{2}{d} h(X_1+\dots+X_n)} 
\geq  
\left(\sum_{i=1}^n e^{\frac{2}{d} h(X_i)}  \right) e^{- \bar R} .
$$
\end{coro}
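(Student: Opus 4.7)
The plan is to apply Proposition \ref{prop:dependent} to the rescaled vectors $\bar X_i := X_i/\lambda_i$ with weights $\lambda_1,\dots,\lambda_n$, exploiting the identity $\sum_{i=1}^n \lambda_i \bar X_i = X_1 + \cdots + X_n$ and the fact that $\sum_i \lambda_i^2 = 1$ (which is immediate from the definition of $\lambda_i$).

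First I would verify that the error matrix collapses as claimed in the statement. Let $\tilde{\bar X}_i$ be independent copies of $\bar X_i$, and $\tilde{\bar Y}_i(s) = e^{-s}\tilde{\bar X}_i + \sqrt{1-e^{-2s}}Z_i$. Since the $\tilde{\bar Y}_i(s)$ are jointly independent, the scores $\rho_i$ of the joint density split as $\rho_i = -\nabla_i p_i/p_i$, and so $\mathbb{E}\langle \rho_i,\rho_j\rangle = \langle \mathbb{E}\rho_i,\mathbb{E}\rho_j\rangle = 0$ for $i \neq j$ (the marginal expectations of a score vanish), while $\mathbb{E}|\rho_i|^2 = I(\tilde{\bar Y}_i(s)) = I(\bar Y_i(s))$ because $\tilde{\bar X}_i \stackrel{d}{=} \bar X_i$. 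Hence $\tilde{\bar{\bf I}}(s) = \mathrm{diag}(I(\bar Y_i(s)))_i$, so the quantity $\bar R$ in the corollary equals $\frac{2}{d}$ times $\lim_{t\to\infty} \lambda^T \bar{\bf R}_t \lambda$, with $\bar{\bf R}_t = \int_0^t [\bar{\bf I}(s) - \tilde{\bar{\bf I}}(s)]\,ds$ the error matrix produced by Proposition \ref{prop:dependent}.

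Next I would handle the scaling. By the standard scaling identity $h(aV) = h(V) + d\log|a|$ we have $h(\bar X_i) = h(X_i) - \tfrac{d}{2}\log \lambda_i^2$. Writing $S := \sum_{j=1}^n e^{2h(X_j)/d}$, the definition $\lambda_i^2 = e^{2h(X_i)/d}/S$ rearranges to $\tfrac{d}{2}\log \lambda_i^2 = h(X_i) - \tfrac{d}{2}\log S$, whence $h(\bar X_i) = \tfrac{d}{2}\log S$, \emph{independent of $i$}. This is the key cancellation. Using $\sum_i \lambda_i^2 = 1$, we obtain $\sum_{i=1}^n \lambda_i^2 h(\bar X_i) = \tfrac{d}{2}\log S$.

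Finally, I would plug these ingredients into Proposition \ref{prop:dependent} applied to $(\bar X_1,\dots,\bar X_n)$ with weights $(\lambda_1,\dots,\lambda_n)$:
\begin{align*}
h(X_1 + \cdots + X_n)
    &= h\Bigl(\sum_{i=1}^n \lambda_i \bar X_i\Bigr)
        \geq \sum_{i=1}^n \lambda_i^2 h(\bar X_i) - \limsup_{t \to \infty} \lambda^T \bar{\bf R}_t \lambda \\
    &= \tfrac{d}{2}\log S - \tfrac{d}{2}\bar R.
\end{align*}
Multiplying by $2/d$ and exponentiating yields the claimed inequality $e^{2h(X_1+\cdots+X_n)/d} \geq S\, e^{-\bar R}$. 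The only subtlety is the bookkeeping---recognizing the reduction of $\tilde{\bar{\bf I}}(s)$ to the diagonal matrix of single-variable Fisher informations, and carrying out the scaling so that the unknown quantities $h(\bar X_i)$ conveniently collapse to a common constant---and this is what makes the specific weighting $\lambda_i^2 \propto e^{2h(X_i)/d}$ the natural choice.
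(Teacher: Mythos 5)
Your proof is correct and follows essentially the same route as the paper's: apply Proposition \ref{prop:dependent} to the rescaled variables $\bar X_i = X_i/\lambda_i$, use $h(\bar X_i)=h(X_i)-d\log\lambda_i$ so that the weighted entropies collapse to the common value $\tfrac{d}{2}\log S$, and exponentiate. Your explicit check that the independent-copy Fisher matrix reduces to $\mathrm{diag}(I(\bar Y_i(s)))_i$ is a detail the paper leaves implicit, but it is the same argument.
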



\begin{remark}
As for the previous proposition, if $X_i$  are independent $\bar R(X_1,\dots,X_n)=0$ and we recover the usual entropy power inequality.

\end{remark}

\begin{proof}
It follows by Proposition \ref{prop:dependent}
applied to $\bar X_i=X_i/\lambda_i$, $i=1,\dots,n$  that
\begin{align*}
&e^{\frac{2}{d} h(X_1+\dots+X_n)} 
 = 
e^{\frac{2}{d} h(\lambda_1 \bar X_1 + \dots + \lambda_n \bar X_n)} \\
& \geq 
\exp \left\{ \frac{2 \lambda_1^2 }{d}h( \bar X_1) + \dots + \frac{2 \lambda_n^2}{d}h( \bar X_n) \right\} e^{-R(\bar X_1,\dots,\bar X_n)} \\
& =
\prod_{i=1}^n \left( \frac{e^{\frac{2}{d} h(X_i)}}{\lambda_i^2}\right)^{\lambda_i^2} 
e^{-\bar R(X_1,\dots,X_n)}
\end{align*}
since $h(\bar X_i) = h(X_i/\lambda_i)=h(X_i)-d \log \lambda_i$ and $R(\bar X_1,\dots,\bar X_n)=\bar R(X_1,\dots,X_n)$. The expected result follows.
\end{proof}

\section{Entropy power inequality for conditional entropy}
\label{sec:cond-EPI}

Our next aim is to deal with entropy power inequality for conditional entropy. We start with a linearized form similar to Proposition \ref{prop:dependent}.

\begin{prop}\label{prop:conditionalEPI}
Let $X_1,\dots,X_n$ be $\mathbb{R}^d$-valued random variables and $\lambda_1,\dots,\lambda_n \in (0,1)$ so that $\sum_{i=1}^n \lambda_i^2=1$. Set
$Y_i(s) = e^{-s}X_i + \sqrt{1-e^{-2s}}Z_i$, $s \geq 0$, $i=1,\dots,n$, where $Z_1,\dots,Z_n \sim N(0,I)$ are independent standard Gaussian (on $\mathbb{R}^d$).
Then it holds
\begin{align*}
h &\left(\sum_{i=1}^n \lambda_i Y_i  \right) 
 - \sum_{i=1}^n \lambda_i^2 h(Y_i|Y_j, j \neq i)
 \\
&\leq 
h \left( \sum_{i=1}^n \lambda_i  X_i \right) - \sum_{i=1}^n \lambda_i^2 h(X_i|X_j, j \neq i) 
+ S_t
\end{align*}
where $$S_t = S_t(X_1,\dots,X_n)= \int_0^t \mathcal{S}_s ds 
$$
for
\begin{align*} \mathcal{S}_s \coloneqq&
\sum_{i,j=1}^n \lambda_i \lambda_j I_{ij}(Y_1(s),\dots,Y_n(s))  \\
&- \sum_{i=1}^n I_{ii}(Y_1(s),\dots,Y_n(s))
+ \sum_{i=1}^n \lambda_i^2 I(Y_j(s), j\neq i) .
\end{align*}
In particular,
$$
h \left( \sum_{i=1}^n \lambda_i  X_i \right) \geq  \sum_{i=1}^n \lambda_i^2 h(X_i|X_j, j \neq i) 
- \limsup_{t \to \infty} S_t .
$$
\end{prop}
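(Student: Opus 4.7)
The proof will parallel that of Proposition \ref{prop:dependent}, using the Ornstein--Uhlenbeck flow, but tracking three entropy functionals rather than two in order to produce conditional entropies at the end. First I would rewrite $h(Y_i(t)|Y_j(t), j\neq i) = h(Y(t)) - h(Y_j(t), j\neq i)$ (and the analogous identity for $X$) so that, using $\sum_i \lambda_i^2 = 1$, the LHS of the target inequality becomes
\[
    h\Bigl(\sum_{i=1}^n \lambda_i Y_i(t)\Bigr) - h(Y(t)) + \sum_{i=1}^n \lambda_i^2 h(Y_j(t), j\neq i),
\]
and analogously on the $X$ side. The problem thus reduces to controlling the difference of these three entropies between time $0$ and time $t$.

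Next I would apply the OU entropy--Fisher identity \eqref{eq:FisherOU} to each of these three functionals in its native dimension: $d$ for $\sum_i \lambda_i Y_i(t)$ (noting that $\sum_i \lambda_i Z_i \sim N(0,I_d)$ precisely because $\sum_i \lambda_i^2 = 1$, so the sum really is an OU evolution in dimension $d$), $nd$ for the joint $Y(t)$, and $(n-1)d$ for $(Y_j(t), j\neq i)$. A bookkeeping check shows that the linear-in-$t$ Gaussian terms cancel: the net coefficient is $-d + nd - (n-1)d\sum_i\lambda_i^2 = 0$. What remains is the integral from $0$ to $t$ of
\[
    I\Bigl(\sum_{i=1}^n \lambda_i Y_i(s)\Bigr) - \sum_{i=1}^n I_{ii}(Y(s)) + \sum_{i=1}^n \lambda_i^2 \, I(Y_j(s), j\neq i).
\]
Bounding the first term via the weighted Fisher inequality \eqref{eq:blachman-lambda} by $\sum_{i,j}\lambda_i\lambda_j I_{ij}(Y(s))$ converts this expression into $\mathcal S_s$ (with an inequality in the right direction), which upon integration gives exactly $S_t$, establishing the main inequality.

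For the ``in particular'' statement, I would let $t\to\infty$: each $Y_i(t)$ converges in distribution, jointly, to independent standard Gaussians $Z_i \sim N(0,I_d)$. Consequently $h\bigl(\sum_i \lambda_i Y_i(t)\bigr) \to \tfrac{d}{2}\log(2\pi e)$ and $h(Y_i(t)|Y_j(t), j\neq i) \to h(Z_i) = \tfrac{d}{2}\log(2\pi e)$, so the LHS of the main inequality tends to $0$ (again using $\sum_i \lambda_i^2 = 1$), and rearranging yields the desired lower bound on $h(\sum_i \lambda_i X_i)$.

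The main delicate point is the dimension bookkeeping when invoking \eqref{eq:FisherOU} on three objects living in ambient dimensions $d$, $(n-1)d$, and $nd$; the exact cancellation of the linear terms is what allows the argument to reduce cleanly to the pointwise Fisher inequality \eqref{eq:blachman-lambda}, which supplies the analytic content (exactly as in Proposition \ref{prop:dependent}). No further dependence hypothesis on the $X_i$ is needed beyond joint smoothness, since the weighted Blachman--Stam inequality was already proved without any independence assumption.
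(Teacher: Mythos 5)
Your proposal is correct and follows essentially the same route as the paper's proof: decompose the conditional entropies via $h(Y_i|Y_j, j\neq i) = h(Y) - h(Y_j, j\neq i)$, apply the Ornstein--Uhlenbeck entropy--Fisher identity \eqref{eq:FisherOU} to the three resulting entropy functionals (with the linear Gaussian terms cancelling since $\sum_i\lambda_i^2=1$), bound the Fisher information of the weighted sum by \eqref{eq:blachman-lambda}, and pass to the limit $t\to\infty$ for the final assertion. Your explicit dimension bookkeeping for the cancellation $-d+nd-(n-1)d\sum_i\lambda_i^2=0$ is a welcome clarification of a step the paper leaves implicit.
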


\begin{proof}
Observe that, for any $i=1,\dots,n$,
\begin{equation} \label{eq:conditionalH}
h(Y_i(t)|Y_j(t), j \neq i) = 
h(Y_1(t),\dots,Y_n(t)) - h(Y_j(t), j \neq i) .
\end{equation}
Therefore, 
\begin{align*}
 h\left( \sum_{i=1}^n \lambda_i Y_i(t) \right) 
 &- \sum_{i=1}^n \lambda_i^2 h(Y_i(t)|Y_j(t), j \neq i)
 \\
& = 
 h\left( \sum_{i=1}^n \lambda_i Y_i(t) \right) 
 - 
 h(Y_1(t),\dots,Y_n(t))+ \sum_{i=1}^n \lambda_i^2 h( Y_j(t), j\neq i )  .
\end{align*}
Hence, by \eqref{eq:FisherOU}, it holds
\begin{align*}
 h&\left( \sum_{i=1}^n \lambda_i Y_i(t) \right) 
 - \sum_{i=1}^n \lambda_i^2 h(Y_i(t)|Y_j(t), j \neq i) \\
& =
 h\left( \sum_{i=1}^n \lambda_i X_i \right) 
 - h(X_1,\dots,X_n)
 + \sum_{i=1}^n \lambda_i^2 h(X_j, j \neq i)
 \\
&\quad  + \int_0^t \bigg[I \left( \sum_{i=1}^n \lambda_i Y_i(s) \right) - I(Y_1(s), \dots,Y_n(s))
 + \sum_{i=1}^n \lambda_i^2 I(Y_j(s), j \neq i) \bigg] ds \\
& =
 h\left( \sum_{i=1}^n \lambda_i X_i \right) 
 - \sum_{i=1}^n \lambda_i^2 h(X_i|X_j, j \neq i)
 \\
&\quad  + \int_0^t \bigg[I \left( \sum_{i=1}^n \lambda_i Y_i(s) \right) - I(Y_1(s), \dots,Y_n(s)) + \sum_{i=1}^n \lambda_i^2 I(Y_j(s), j \neq i) \bigg] ds 
\end{align*}
where in the last equality we used  \eqref{eq:conditionalH} at time $t=0$ to reconstruct the conditional entropies.
Applying the Fisher information inequality \eqref{eq:blachman-lambda}, we obtain
$$
I \left( \sum_{i=1}^n \lambda_i Y_i(s) \right) 
\leq 
\sum_{i,j=1}^n \lambda_{i} \lambda_j I_{ij}(s) 
$$
where we set for simplicity $I_{ij}(s):=I_{ij}(Y_1(s),\dots,Y_n(s))$.
On the other hand, if we denote $p_s$ the density of $(Y_1(s),\dots, Y_n(s))$,
\begin{align*}
 I(Y_1(s),\dots,Y_n(s))
& =
 \int_{(\mathbb{R}^d)^n} \frac{|\nabla p_s|^2}{p_s} 
 =
\sum_{i=1}^n I_{ii}(s) .
\end{align*}
It follows that 
\begin{align*}
& h\left( \sum_{i=1}^n \lambda_i Y_i(t) \right) 
 - \sum_{i=1}^n \lambda_i^2 h(Y_i(t)|Y_j(t), j \neq i)
    \\
&\leq 
h\left( \sum_{i=1}^n \lambda_i X_i\right) 
 - \sum_{i=1}^n \lambda_i^2 h(X_i|X_j, j \neq i) \\
&\quad  + 
\int_0^t \bigg[ \sum_{i,j=1}^n \lambda_i \lambda_j I_{ij}(s)  
- \sum_{i=1}^n I_{ii}(s)
 + \sum_{i=1}^n \lambda_i^2 I(Y_j(s), j\neq i) \bigg] ds 
\end{align*}
as expected.

The second part of the theorem follows from the fact that, in the limit $t \to \infty$,
\begin{align*}
h \left(\sum_{i=1}^n \lambda_i Y_i  \right) 
 - \sum_{i=1}^n \lambda_i^2 h&(Y_i|Y_j, j \neq i)
 \to \\
 &h \left(Z_{n+1} \right) 
 - \sum_{i=1}^n \lambda_i^2 h(Z_i),
 \end{align*}
where $Z_i$, $i=1,\dots,n+1$, are \textit{i.i.d.}\ standard Gaussian variables in $\mathbb{R}^d$, for which it is known that $h(Z_i)=\frac{d}{2}\log(2\pi e)$. In particular 
$ h \left(Z_{n+1} \right) 
 - \sum_{i=1}^n \lambda_i^2 h(Z_i) = 0$ leading to the desired conclusion.
\end{proof}

As a direct consequence, we deduce a general entropy power inequality for conditional entropy.

\begin{coro} \label{cor:conditionalEPI}
Let $X_1,\dots,X_n$ be $\mathbb{R}^d$-valued random variables and, for $i=1,\dots, n$, 
$$
\lambda_i^2 := \frac{e^{\frac{2}{d} h(X_i|X_j, j \neq i)}}{\sum_{i=1}^n e^{\frac{2}{d} h(X_i|X_j, j \neq i)}} .
$$
Set $\bar X_i=X_i / \lambda_i$
and  
$\bar Y_i(s) = e^{-s}\bar X_i + \sqrt{1-e^{-2s}}Z_i$, $s \geq 0$,
where $Z_1, \dots,Z_n \sim N(0,I)$ are independent standard Gaussian (on $\mathbb{R}^d$). Finally denote
\begin{align*}
\bar S 
&= 
\bar S(X_1,\dots,X_n) 
    \\
&= 
\frac{2}{d} \int_0^\infty \bigg[ \sum_{i,j=1}^n \lambda_i \lambda_j \bar I_{ij}(s)  
- \sum_{i=1}^n \bar I_{ii}(s) 
 + \sum_{i=1}^n \lambda_i^2 I(\bar Y_j(s), j\neq i) \bigg] ds 
\end{align*}
where we set for simplicity $\bar I_{ij}(s):=I_{ij}(\bar Y_1(s),\dots,\bar Y_n(s))$.
Then it holds
\begin{equation} \label{eq:EPI-conditional}
e^{\frac{2}{d} h(X_1\dots+X_n)} \geq \left( \sum_{i=1}^n e^{\frac{2}{d}  h(X_i|X_j, j \neq i)}\right) e^{-\bar S}    .
\end{equation}
In particular, if $\bar I_{ij}(s) \leq 0$ for all $i \neq j$ and all $s >0$, then
$$
e^{\frac{2}{d} h(X_1\dots+X_n)} \geq  \sum_{i=1}^n e^{\frac{2}{d} h(X_i|X_j, j \neq i)} .
$$
\end{coro}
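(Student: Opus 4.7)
The strategy is to deduce the corollary from Proposition \ref{prop:conditionalEPI}, in direct analogy with how Corollary \ref{cor:dependent} was deduced from Proposition \ref{prop:dependent}. I would apply Proposition \ref{prop:conditionalEPI} to the rescaled variables $\bar X_i = X_i/\lambda_i$, using the weights $\lambda_i$ defined in the statement (which satisfy $\sum_i\lambda_i^2=1$). Since the map $x\mapsto x/\lambda_i$ is a bijection, conditioning on $\bar X_j$, $j\neq i$, is the same as conditioning on $X_j$, $j\neq i$, and the scaling relation gives $h(\bar X_i|\bar X_j,j\neq i)=h(X_i|X_j,j\neq i)-d\log\lambda_i$. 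Moreover $\sum_{i}\lambda_i\bar X_i=\sum_i X_i$, so Proposition \ref{prop:conditionalEPI} applied to $(\bar X_1,\dots,\bar X_n)$ yields
\begin{align*}
h\bigl(\textstyle\sum_i X_i\bigr)\ \geq\ \sum_i\lambda_i^2\bigl[h(X_i|X_j,j\neq i)-d\log\lambda_i\bigr]-S(\bar X_1,\dots,\bar X_n).
\end{align*}

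Next I would exponentiate this with factor $2/d$, obtaining
\begin{align*}
e^{\frac{2}{d}h(\sum_i X_i)}\ \geq\ \prod_{i=1}^n\Bigl(\tfrac{e^{\frac{2}{d}h(X_i|X_j,j\neq i)}}{\lambda_i^2}\Bigr)^{\lambda_i^2}\,e^{-\bar S},
\end{align*}
with $\bar S=\frac{2}{d}\,S(\bar X_1,\dots,\bar X_n)$ matching the definition given in the corollary. The choice of $\lambda_i^2$ is made precisely so that the ratio $e^{\frac{2}{d}h(X_i|X_j,j\neq i)}/\lambda_i^2=\sum_k e^{\frac{2}{d}h(X_k|X_j,j\neq k)}$ is constant in $i$; combined with $\sum_i\lambda_i^2=1$, the weighted geometric mean collapses into the arithmetic sum appearing in \eqref{eq:EPI-conditional}.

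For the ``In particular'' statement I need to show $\bar S\leq 0$ under the hypothesis $\bar I_{ij}(s)\leq 0$ for $i\neq j$. Splitting the quadratic form $\sum_{i,j}\lambda_i\lambda_j\bar I_{ij}(s)$ into diagonal and off-diagonal parts, the integrand $\mathcal{S}_s$ becomes
\begin{align*}
\mathcal{S}_s=\sum_{i\neq j}\lambda_i\lambda_j\bar I_{ij}(s)-\sum_i(1-\lambda_i^2)\bar I_{ii}(s)+\sum_i\lambda_i^2 I(\bar Y_j(s),j\neq i).
\end{align*}
Applying the marginal comparison \eqref{eq:comparison2} gives $I(\bar Y_j(s),j\neq i)\leq\sum_{j\neq i}\bar I_{jj}(s)$, and swapping the order of summation turns the last two terms into $\leq 0$. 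Under the off-diagonal negativity hypothesis the first sum is also $\leq 0$ (since $\lambda_i>0$), whence $\mathcal{S}_s\leq 0$, $\bar S\leq 0$, and $e^{-\bar S}\geq 1$.

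The only non-routine ingredient is the cancellation in the last paragraph, which is the main substantive step; the rest is bookkeeping that mirrors the unconditional case. I would be careful to justify the optimality of the choice of $\lambda_i$ (as a maximizer of the right-hand side of the linearized bound, analogous to Stam's classical argument), and to verify that the limits defining $\bar S$ exist in the same sense as in Proposition \ref{prop:conditionalEPI}.
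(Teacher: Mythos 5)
Your proposal is correct and follows essentially the same route as the paper: apply Proposition \ref{prop:conditionalEPI} to the rescaled variables $\bar X_i=X_i/\lambda_i$, use the scaling identity $h(\bar X_i\,|\,\bar X_j, j\neq i)=h(X_i\,|\,X_j, j\neq i)-d\log\lambda_i$ so that the weighted geometric mean collapses to the arithmetic sum, and then bound $\bar S$ via \eqref{eq:comparison2}, after which the diagonal contributions cancel and only $\sum_{i\neq j}\lambda_i\lambda_j\bar I_{ij}(s)\leq 0$ remains. (Your exponent $-d\log\lambda_i$ is in fact the correct one; the paper's proof writes $-\log\lambda_i$, an apparent typo.)
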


\begin{proof}
By Proposition \ref{prop:conditionalEPI} applied to $\bar X_i$, it holds 
\begin{align*}
&e^{\frac{2}{d} h(X_1+\dots+X_n)} 
 = 
e^{ \frac{2}{d} h(\lambda_1 \bar X_1 + \dots + \lambda_2 \bar X_2)} \\
& \geq 
\exp \left\{ \frac{2}{d}  \sum_{i=1}^n \lambda_i^2h(\bar X_i|\bar X_j, j \neq i)  \right\} e^{-\limsup_{t \to \infty} S_t(\bar X_1,\dots,\bar X_n)} \\
& =
\prod_{i=1}^n \left( \frac{e^{\frac{2}{d}  h(X_i|X_j, j \neq i)}}{\lambda_i^2}\right)^{\lambda_i^2} 
 e^{-\bar S} 
\end{align*}
since $h(\bar X_i|\bar X_j, j \neq i) 
=h(X_i|X_j, j \neq i)-\log \lambda_i$ for all $i=1,\dots,n$, and 
$$
\limsup_{t \to \infty} S_t(\bar X_1,\dots,\bar X_n)= \bar S(X_1,\dots,X_n) .
$$
The first result follows.

For the second conclusion, by \eqref{eq:comparison2}
applied to $\bar Y_1(s),\dots,\bar Y_n(s)$, for all $i=1,\dots,n$, it holds
\begin{align*}
I(\bar Y_j(s), j \neq i)  
& \leq 
\sum_{j \neq i} \bar I_{jj}(s) \\
& =
\left(\sum_{j=1}^n \bar I_{jj}(s) \right) - \bar I_{ii}(s) 
.
\end{align*}
Therefore, since $\sum_{i=1}^n  \lambda_i^2 = 1$,
$$
\sum_{i = 1}^n \lambda_i^2 I(\bar Y_j, j \neq i)
\leq 
\sum_{i=1}^n \bar I_{ii}(s) - \sum_{i = 1}^n \lambda_i^2 \bar I_{ii}(s) .
$$
In turn,
\begin{align*}
\bar S 
& = 
2\int_0^\infty \bigg[ \sum_{i,j=1}^n \lambda_i \lambda_j \bar I_{ij}(s)  
- \sum_{i=1}^n \bar I_{ii}(s) 
+ \sum_{i=1}^n \lambda_i^2 I(\bar Y_j(s), j\neq i) \bigg] ds     \\
& \leq 
2\int_0^\infty  \sum_{\genfrac{}{}{0pt}{}{i,j=1}{i \neq j}}^n \lambda_i \lambda_j  \bar I_{ij}(s)  
 ds .
\end{align*}
The assumption $\bar I_{ij}(s) \leq 0$ ensures that 
$\bar S \leq 0$ leading to the desired entropy power inequality for conditional entropy.
\end{proof}

\begin{remark}
The reader might be surprised by the opposite expression of $\bar R$ in Corollary \ref{cor:dependent} and  $\bar S$ in Corollary \ref{cor:conditionalEPI}. Indeed, for $n=2$, $\bar R$ and $\bar S$ take the form (after some algebra and using that $\lambda_1^2+\lambda_2^2=1$)
\begin{align*}
\bar R &= 
\frac{d}{2} \int_0^\infty
\bigg[
\lambda_1^2 (\bar I_{11}(s) - I( \bar Y_1(s)))
+\lambda_2^2 (\bar I_{22}(s) - I( \bar Y_2(s)))
 +
2 \lambda_1 \lambda_2 \bar I_{12}(t) 
\bigg] dt  
\end{align*}
while
\begin{align*}
\bar S 
&= 
\frac{d}{2}  \int_0^\infty \bigg[ 
- \lambda_1^2 (\bar I_{22}(s) - I( \bar Y_2(s)))
-\lambda_2^2 (\bar I_{11}(s) - I( \bar Y_1(s)))
 +
2 \lambda_1 \lambda_2 \bar I_{12}(s) 
\bigg] ds.
\end{align*}
We emphasize the terms $\bar I_{ii}(t) - I( \bar Y_i(t))$  appear in both  $\bar R$ and $\bar S$,
but with opposite signs.  This  change arises from the 
decomposition formula of the entropy \eqref{eq:conditionalH}.  The conditional entropy power inequality (Corollary \ref{cor:conditionalEPI}), leverages that  the non-negativity of $\bar I_{ii}(t) - I( \bar Y_i(t))$ (see  \eqref{eq:comparison}) can be combined with the assumption $I_{12}(s) \leq 0$, to directly yield $\bar S \leq 0$.  Such an argument cannot be applied directly to guarantee the negativity of $\bar R$. 

\end{remark}

In what follows we give a sufficient condition on the density $p$ of the random vector $(X_1,\dots,X_n)$ for the condition 
$\bar I_{ij}(s) \leq 0$ in Corollary \ref{cor:conditionalEPI} to hold for all $s >0$.  To that aim we need to introduce the notion of log-supermodular functions on $(\mathbb{R}^d)^n$.

\begin{defn}
A function $u \colon (\mathbb{R}^d)^n \to (0,\infty)$ is said to be log-supermodular if for all $x=(x_1,\dots,x_n), y=(y_1,\dots,y_n) \in (\mathbb{R}^d)^n$ it holds
$$
u(x) u(y) \leq u(x \wedge y) u(x \vee y) 
$$
where 
 $x \vee y \in (\mathbb{R}^d)^n$ denotes the componentwise maximum of $x$ and $y$ and $x \wedge y \in (\mathbb{R}^d)^n$
denotes the componentwise minimum of $x$ and $y$. 
\end{defn}

To be precise, if $x_i=(x_{i,1},\dots,x_{i,d})$, $i=1,\dots,n$, and similarly for $y$,
$$
x \wedge y = (\min(x_{1,1},y_{1,1}), \dots ,\min(x_{n,d},y_{n,d})) 
$$
and
$$
x \vee y = (\max(x_{1,1},y_{1,1}),\dots,\max(x_{n,d},y_{n,d})).
$$

The class of log-supermodular densities is widely studied in various field of mathematics; we refer to the introduction of \cite{ZR22} for an account of the literature and discussion. Note that, in some older literature, log-supermodular densities are called multivariate totally positive of order 2 (MTP$_2$).

It is known (see, e.g., \cite{Top98:book} or \cite[Proposition 2.5]{FMZ24}) that a continuously twice-differentiable function $u$ is log-supermodular if and only if $\frac{\partial^2}{\partial x_{i,k} \partial x_{j,l}} \log u \geq 0$ for all 
distinct $(i,k), (j,l)$, $i,j=1,\dots, n$,  $k,l = 1 \dots,d$.

We will need to deal with log-supermodular densities that remain log-supermodular after convolution with a Gaussian with covariance matrix proportional to the identity matrix (\textit{i.e.}\ of the form $\kappa I$, with $\kappa >0$ and $I$ the identity matrix). We will call this class of densities $\mathcal{C}$. Namely, if $g_s$ stands for the centered Gaussian density with covariance matrix $sI_{nd}$ ($s>0$) on $\mathbb{R}^{nd}$, and let $\mathcal{L}$ denote the space of log-supermodular density functions we denote by
$$
\mathcal{C} \coloneqq \left\{ p \in \mathcal{L} : p * g_s \in \mathcal{L}, \forall
s>0 \right\}.
$$
As discussed in Karlin and Rinott \cite{KR80:2} convolution of any two log-supermodular  densities need not be log-supermodular (see 
\cite[page 486]{KR80:2} for a counterexample with two Gaussian densities). 
In \cite{KR83:1} it is proved that a Gaussian density with covariance matrix $\Sigma$ is log-supermodular if and only if the off-diagonal entries of $\Sigma^{-1}$ are non-positive. In \cite[Theorem 6]{ZR22}
Zartash and Robeva give some conditions for a log-supermodular density $p$ to belong to $\mathcal{C}$. 
Note that by a straightforward scaling argument, having $p*g_s$ log-supermodular for all $s>0$ in the definition of $\mathcal{C}$ is equivalent to having $p*g_{s_o}$ log-supermodular for any specific $s_o>0$.
In particular $\mathcal{C} \neq \emptyset$. Moreover, they conjecture (see \cite[Conjecture 10]{ZR22}) that, in fact, $\mathcal{C}$ coincides with the class of all log-supermodular densities. We will prove in a forthcoming paper \cite{MMR25:2} that their conjecture is true and actually holds in more generality.

We will prove the following Corollary.

\begin{coro}\label{cor:submodular}
Let 
$p$ be a  log-supermodular density of a random vector $X=(X_1,\dots,X_n) \in (\mathbb{R}^d)^n$. 
Then it holds
$$
e^{\frac{2}{d}h(X_1+\dots+X_n)} \geq e^{\frac{2}{d}h(X_1|X_i, i \neq 1)} + \dots +  e^{\frac{2}{d}h(X_n|X_i, i \neq n)} .
$$
\end{coro}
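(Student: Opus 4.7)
The plan is to reduce the claim to Corollary \ref{cor:conditionalEPI}, which already yields the desired conditional entropy power inequality under the single hypothesis $\bar I_{ij}(s)\le 0$ for all $i\ne j$ and all $s>0$. The entire task therefore becomes verifying this sign condition on the off-diagonal entries of the Fisher information matrix of the Ornstein-Uhlenbeck evolution $\bar Y(s)$.

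First I would argue that the joint density $q_s$ of $\bar Y(s)$ is log-supermodular for every $s>0$. The passage from $X$ to $\bar X=(X_1/\lambda_1,\dots,X_n/\lambda_n)$ is a coordinatewise positive rescaling, and the lattice operations $\wedge$ and $\vee$ commute with any positive diagonal scaling, so log-supermodularity of $p$ transfers to the density of $\bar X$. The random vector $\bar Y(s)$ is then obtained from $\bar X$ by the further positive rescaling $y\mapsto e^{-s}y$ followed by convolution with a centred Gaussian of covariance $(1-e^{-2s})I_{nd}$, a positive multiple of the identity. Invoking the resolution of the Zartash-Robeva conjecture discussed in the introduction and proved in \cite{MMR25:2} (together with the homogeneity argument recalled just after the definition of $\mathcal{C}$ to pass from unit covariance to $\kappa I_{nd}$), such a convolution preserves log-supermodularity, so $q_s$ is log-supermodular.

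Next I would express $\bar I_{ij}(s)$ in a form to which log-supermodularity can be applied. Using $\partial_{j,k}q_s/q_s=\partial_{j,k}\log q_s$ and integrating by parts in the variable $y_{i,k}$ (boundary terms vanish because $q_s$ inherits Gaussian-type decay from the convolution with $g_{1-e^{-2s}}$), one obtains
\begin{equation*}
\bar I_{ij}(s)
=\sum_{k=1}^d\int\frac{\partial_{i,k}q_s\,\partial_{j,k}q_s}{q_s}\,dy
=-\sum_{k=1}^d \mathbb{E}\big[\partial_{i,k}\partial_{j,k}\log q_s(\bar Y(s))\big].
\end{equation*}
Since $i\ne j$, the pairs $(i,k)$ and $(j,k)$ are distinct, so the twice-differentiable characterization of log-supermodularity recalled in Section \ref{sec:cond-EPI} (Topkis) gives $\partial_{i,k}\partial_{j,k}\log q_s\ge 0$, whence $\bar I_{ij}(s)\le 0$. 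Plugging this into Corollary \ref{cor:conditionalEPI} finishes the proof.

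The main obstacle is the second step: propagating log-supermodularity through the Ornstein-Uhlenbeck flow, which is exactly where the nontrivial input from \cite{MMR25:2} enters; without that, one would be forced to restrict the statement to the subclass $\mathcal{C}$, as in the original draft of the paper. Once this preservation is secured, the integration-by-parts identity together with Topkis' criterion yields the sign condition $\bar I_{ij}(s)\le 0$ immediately and the corollary follows routinely from Corollary \ref{cor:conditionalEPI}.
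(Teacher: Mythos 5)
Your proposal is correct and follows essentially the same route as the paper's own proof: rescaling preserves log-supermodularity, the Ornstein--Uhlenbeck evolution preserves it via the resolved Zartash--Robeva conjecture from \cite{MMR25:2}, and the integration-by-parts identity combined with the second-derivative (Topkis) characterization gives $\bar I_{ij}(s)\le 0$ for $i\ne j$, after which Corollary \ref{cor:conditionalEPI} applies. No substantive differences to report.
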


\begin{proof}
We observe first that, if $u \in \mathcal{C}$ is twice differentiable on $(\mathbb{R}^d)^n$, by definition,
\begin{align*}
P_t^*u((x))
& =
\mathbb{E}[g_{\sqrt{1-e^{-2t}}}(x-e^{-t}X)] \\
& = 
\int_{\mathbb{R}^{nd}} u(y) g_{\sqrt{1-e^{-2t}}}(x-e^{-t}y) dy   \\
& =
e^{nt} \int_{\mathbb{R}^{nd}} u(z_1e^{t},\dots,z_ne^{t}) g_{\sqrt{1-e^{-2t}}}(x-z) dz 
\\
& =
u_t * g_{\sqrt{1-e^{-2t}}} (x),
\end{align*}
where we set $u_t(z):=e^{nt} u(z_1e^{t},\dots,z_ne^{t})$.
Therefore, if $u \in \mathcal{C}$,  $u_t$ is log-supermodular, and by definition of $\mathcal{C}$, the convolution $u_t * g_{\sqrt{1-e^{-2t}}}$
is also log-supermodular. Equivalently this can be rephrased as: if $u \in \mathcal{C}$ is smooth enough,  $P_t^*u$ is log-supermodular. 

Our aim is now to prove that, with the notations of Corollary \ref{cor:conditionalEPI}, it holds
$$
I_{ij}(\bar Y_1(s),\dots,Y_n(s)) \leq 0
$$
for all $i \neq j$, and to apply Corollary \ref{cor:conditionalEPI}.
Since $p$ is the density of $(X_1,\dots,X_n)$, for $(x_1,\dots,x_n) \in (\mathbb{R}^d)^n$,
$$
\bar p(x_1,\dots,x_n) = \lambda_1 \dots \lambda_n p(\lambda_1x_1,\dots,\lambda_n x_n),
$$
is the density of the random vector
$(\bar X_1,\dots,\bar X_n)=(\frac{X_1}{\lambda_1},\dots,\frac{X_n}{\lambda_n})$ (the $\lambda_i$'s are defined in Corollary \ref{cor:conditionalEPI}, they satisfy $\lambda_i \in (0,1)$ and $\sum \lambda_i^2=1$). Denote by $\bar p_s$ the density of
$(\bar Y_1(s),\dots,\bar Y_n(s))$ where we recall that
$\bar Y_i(s) = e^{-s}\bar X_i + \sqrt{1-e^{-2s}}Z_i$,
with $Z_i \sim N(0,I)$ an independent standard Gaussian. We know that 
$\bar p_s = P_s^* \bar p$ and by the above observation that $\bar p_s$ is log-supermodular.

Recall that a smooth density $u$ is log-supermodular if and only if $\frac{\partial^2}{\partial x_{i,k}\partial x_{j,l}}\log u \geq 0$ for all distinct $(i,k),(j,l)$, $i,j=1,\dots,n$, $k,l=1,\dots,d$. Hence,
$\frac{\partial^2}{\partial x_{i,k}\partial x_{j,l}}\log \bar p_s \geq 0$ for all distinct $(i,k),(j,l)$.

Finally, integrating by parts, we get
\begin{align*}
I_{ij}(\bar Y_1(s),\dots,\bar Y_n(s)) 
& = \int_{(\mathbb{R}^d)^n} \frac{\langle \nabla_i \bar p_s,  \nabla_j \bar p_s\rangle}{\bar p_s}  \\
& =
 \sum_{k=1}^d \int_{(\mathbb{R}^d)^n} \frac{\partial \log \bar p_s}{\partial x_{i,k}} \frac{\partial \bar p_s}{\partial x_{j,k}} \\
& =
-   \sum_{k=1}^d \int \frac{\partial^2 \log \bar p_s}{\partial x_{i,k} \partial x_{j,k}}  \bar p_s \leq 0,
\end{align*}
which is the expected result.
This ends the proof of the corollary by means of Corollary \ref{cor:conditionalEPI}.
\end{proof}


\section{Final remarks and comments on the existing literature}

In this section we make some connections between our results and the existing literature. We mainly focus on a result by Rioul \cite{Rio11} and an other one by Hao and Jog
\cite{HJ18:isit}

\subsection{On the entropy power inequality of Rioul}

In \cite{Rio11} the author derives some general entropy power inequality valid for any random vector, under some assumption involving the Fisher information
of the evolute of the random vector under heat flow.
In particular, the author is considering variables of the type $X_s=X+\sqrt{s}Z$ where $Z$ is an independent standard normal, whose density is well known to be  equal, up to change of time and scale, to the density $P_s^*f$ of $e^{-s}X+\sqrt{1-e^{-2s}}Z$. In particular this changes of time and scale does not affect the positivity of Fisher information matrices.

Now the assumption of Rioul (see \cite[Section V, Corollary 2]{Rio11})
reads, for $d=1$, in our language and under the notations of Corollary \ref{cor:dependent}, 
$\bar{\bf  I}(s) - \mathrm{diag}(I(\bar Y_i(s)))_i$
negative semi-definite. In turn, under this assumption, Corollary \ref{cor:dependent} shows that 
$$
e^{\frac{2}{d} h(X_1+\dots+X_n)} 
\geq  
\left(\sum_{i=1}^n e^{\frac{2}{d} h(X_i)}  \right)  
$$
therefore recovering his entropy power inequality for dependent random variables. In that sense, for $d=1$, our result (Corollary \ref{cor:dependent}) can be seen as a quantitative version of Rioul's entropy power inequality.

We also mention that, as stated in \cite[Section V, Corollary 3]{Rio11}, the negative semi-definiteness of $\bar{\bf  I}(s) - \mathrm{diag}(I(\bar Y_i(s)))_i$ implies earlier conditions proposed by Takano \cite{Tak96} and Johnson \cite{Joh04:1} to guarantee an entropy power inequality to hold for dependent random variables.

\subsection{Relation with Hao-Jog's inequality}


Our result of Corollary \ref{cor:submodular} has some connection with a theorem  by Hao-Jog as we now explain.

Motivated by the monotonicity of the Shannon entropy 
with respect to the number of summands in the central limit theorem, and related conjectures \cite{BNT16,ENT18:1}, in \cite{HJ18:isit}
the authors proved the following inequality for symmetric\footnote{Symmetric means that the density $p \colon (\mathbb{R}^d)^n \to \mathbb{R}^+$ of $X$ satisfies $p(\varepsilon x)=p(x)$ for all $x \in (\mathbb{R}^d)^n$ and all $ \varepsilon \in (\{-1,1\}^d)^n$ where $\varepsilon x$ is the product coordinate by coordinate.} random vector $X=(X_1,\dots,X_n)$
$$
h \left( \frac{\sum_{i=1}^n X_i}{\sqrt{n}} \right)
\geq \frac{h(X)}{n} .
$$
Assume that the density $p$ of $X$ is log-supermodular. By Corollary \ref{cor:submodular} and the fact that, for any $d$ dimensional random variable $Y$ and any $a>0$, $h(aY)=h(Y)+d\log a$,
it holds
\begin{align*}
& h \left( \frac{\sum_{i=1}^n X_i}{\sqrt{n}} \right)
 =
h \left(\sum_{i=1}^n X_i \right) - \frac{d}{2}\log n \\
& \geq 
\frac{d}{2} \log \left(\frac{e^{\frac{2}{d}h(X_1|X_i, i \neq 1)} + \dots +  e^{\frac{2}{d}h(X_n|X_i, i \neq n)}}{n} \right) \\
& \geq
\frac{1}{n} \left( h(X_1|X_i, i \neq 1) + \dots +  h(X_n|X_i, i \neq n) \right)  
\end{align*}
where the last line follows from the concavity of the logarithm.
The right hand side of the latter is known to be the erasure entropy $\bar h(X)$ of $X$ \cite{VW06:isit} that is less that $h(X)$. 

As a consequence our result of Corollary \ref{cor:submodular} does not directly compare to that of Hao and Jog but both imply
$$
h \left( \frac{\sum_{i=1}^n X_i}{\sqrt{n}} \right)
\geq \frac{\bar h(X)}{n} 
$$
under different assumptions (symmetry or log-supermodularity).

\noindent
{\bf{Acknowledgements:}}  The authors thank Oliver Johnson for fruitful discussion and helpful comments, as well  as two anonymous reviewers whose careful reading and suggestions improved the quality of this paper. The second author's work was supported in part by SECIHTI grant CBF-2024-2024-3907.

\bibliographystyle{plain}
\bibliography{pustak} 

\end{document}